\newtheorem{theorem}{Theorem}
\newtheorem{lemma}[theorem]{Lemma}
\newtheorem{remark}[theorem]{Remark}
\newtheorem{assumption}[theorem]{Assumption}
\newcommand{\change}[2]{\textcolor{red!80}{}{#2}}
\title{Learning an Approximate Model Predictive Controller with Guarantees }
\author{Michael Hertneck$^1$, Johannes K\"ohler$^{2}$, Sebastian Trimpe$^{3}$, Frank Allg\"ower$^{2}$
\thanks{$^1$Michael Hertneck is an M.Sc. student at the University of Stuttgart, 70550 Stuttgart, Germany (email: michaelhertneck@yahoo.de).}%
\thanks{$^2$Johannes K\"ohler and Frank Allg\"ower are with the Institute for Systems Theory and Automatic Control, University of Stuttgart, 70550 Stuttgart, Germany (email: $\{$johannes.koehler,  frank.allgower$\}$@ist.uni-stuttgart.de).}%
\thanks{$^3$Sebastian Trimpe is with the Intelligent Control Systems Group at the Max Planck Institute for Intelligent Systems, 70569 Stuttgart, Germany (email: trimpe@is.mpg.de). }%
\thanks{This work was supported in part by the German Research Foundation (DFG) grant GRK 2198/1, the Max Planck Society, and the Cyber Valley Initiative. }%
 
}
\newcommand{\mytitle}{\textbf{Accepted version.}
To appear in \textit{IEEE Control Systems Letters}.  DOI: 10.1109/LCSYS.2018.2843682\\
\copyright 2018 IEEE. Personal use of this material is permitted. Permission from IEEE must be obtained for all other uses, in any current or future media, including reprinting/republishing this material for advertising or promotional purposes, creating new collective works, for resale or redistribution to servers or lists, or reuse of any copyrighted component of this work in other works.}
\begin{document}

\maketitle

\thispagestyle{fancy}	
\pagestyle{empty}


\begin{abstract}
A supervised learning framework is proposed to approximate a model predictive controller (MPC) with reduced computational complexity and guarantees on stability and constraint satisfaction. The framework can be used for a wide class of nonlinear systems. Any standard supervised learning technique (e.g. neural networks) can be employed to approximate the MPC from samples. In order to obtain closed-loop guarantees for the learned MPC, a robust MPC design is combined with statistical learning bounds. The MPC design ensures robustness to inaccurate inputs within given bounds, and Hoeffding's Inequality is used to validate that the learned MPC satisfies these bounds with high confidence. The result is a closed-loop statistical guarantee on stability and constraint satisfaction for the learned MPC. The proposed learning-based MPC framework is illustrated on a nonlinear benchmark problem, for which we learn a neural network controller with guarantees.

\end{abstract}
\begin{IEEEkeywords}
 Predictive control for nonlinear systems; Machine learning; Constrained control
\end{IEEEkeywords}
\section{Introduction}
\IEEEPARstart{M}{odel} predictive control (MPC) \cite{rawlings2009model} is a modern control method based on repeatedly solving an optimization problem online. It can handle general nonlinear dynamics, hard state and input constraints, and general objective functions. One major drawback of MPC is the computational effort of solving optimization problems online under real-time requirements. Especially for settings with a large number of optimization variables or if a high sampling rate is required, the online optimization may get computationally intractable. Hence, it is often desirable to find an explicit formulation of the MPC that can be evaluated online in a short deterministic execution time, also on relatively inexpensive hardware. 

For linear systems, the MPC optimization problem can be formulated as a multi-parametric quadratic program, which can be solved offline to obtain an explicit control law \cite{bemporad2002explicit}. The extension of \cite{bemporad2002explicit} to nonlinear systems is not straightforward. Hence, the goal of this paper is to develop a framework for approximating a nonlinear MPC through supervised learning with statistical guarantees on stability and constraint satisfaction.

A sketch of the main ideas is given in Figure \ref{fig_diagram2}.
First, a \emph{robust} MPC (RMPC) design is carried out that ensures robustness to bounded input disturbances $ d $ with a user defined bound ($| d | \leq \eta$).  The resulting RMPC feedback law $\pi_\text{\normalfont MPC}(x)$ is sampled offline for suitable states $x_i$ and approximated via any function approximation (regression) technique such as neural networks (NNs).  If the error of the approximate control law $\pi_\text{\normalfont approx}(x)$ is below the admissible bound of the RMPC, recursive feasibility and closed-loop stability for the learned RMPC can be guaranteed.  In order to guarantee a sufficiently small approximation error, we use Hoeffding's Inequality on a suitable validation data set.  The overall result is an approximate MPC (AMPC) with lower computational requirements 
and statistical guarantees on closed-loop stability and constraint satisfaction. 
The proposed approach is applicable to a wide class of nonlinear control problems with state and input constraints.

\usetikzlibrary{backgrounds}
\usetikzlibrary{shapes,arrows,chains}
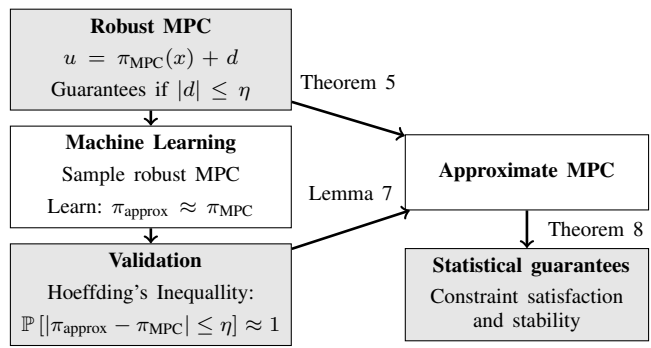
\begin{figure}
		\begin{tikzpicture}
	
   	\node[draw, text width=3.5cm,align=center,fill = gray!20] (node1) {\footnotesize\textbf{Robust MPC} \\ $u = \pi_\text{\normalfont MPC}(x) +  d $ Guarantees if $\abs{ d } \leq \eta$};
   	\node[draw,below  = 0.2 cm  of node1, text width=3.5cm,align=center, minimum height=1.1cm] (node3) {\footnotesize \textbf{Machine Learning} \\
    Sample robust MPC 
    Learn:  $\pi_\text{\normalfont approx} \approx \pi_\text{\normalfont MPC}$ 
   		};
   	\node[draw,below  = 0.2 cm  of node3, text width=3.5cm,align=center, minimum height=1.1cm,fill = gray!20] (node4) {\footnotesize \textbf{ Validation } \\ Hoeffding's Inequallity: $\mathbb{P}\left[\abs{ \pi_\text{\normalfont approx} - \pi_\text{\normalfont MPC}} \leq \eta \right]\approx 1$};
   	\node[draw, text width=3cm, minimum height=1cm, align=center] (node5) at (5,-1.5) {\footnotesize \textbf{Approximate MPC} };
   	\node[draw,below  = 0.5 cm  of node5, text width=3cm,align=center,fill = gray!20] (node6) {\footnotesize  \textbf{ Statistical guarantees}\\[0.8ex]   Constraint satisfaction\\[-0.8ex] and stability};
   	\draw [->,line width = 1pt] (node1) -- (node3);

   	\draw [->,line width = 1pt] (node3) -- (node4);
   	\draw [->,line width = 1pt] (node1) -- (node5);
   	\draw [->,line width = 1pt] (node4) -- (node5);
   	\draw [->,line width = 1pt] (node5) -- (node6);
   	\path [->,line width = 1pt] (node1) -- node [text width=2.5cm,midway,above = 0.75em,align = center] {\footnotesize Theorem \ref{theo_robust}} (node5);
   	\path [->] (node4) -- node [text width=2.5cm,midway,above = 0.75em,align = center] {\footnotesize Lemma \ref{theo_hoeff}} (node5);
   		\path [->] (node5) -- node [text width=2.5cm,midway,right = -1.3 em,align = center] {\footnotesize Theorem \ref{lem_algo}} (node6);
   	
   	\end{tikzpicture}
	\vspace{-0.5 cm}
	\caption{Diagram of the proposed framework. We design an MPC $\pi_\text{MPC}$ with robustness to input disturbance $d$. The resulting feedback law is sampled offline  and approximated ($\pi_\text{approx}$) via supervised learning. Hoeffding's Inequality is used for validation and yields a bound on the error between approximate and original MPC in order to guarantee stability and constraint satisfaction. The result is an approximate MPC
		with statistical guarantees.} 
	\label{fig_diagram2}
\end{figure}

\paragraph*{Contributions}
This paper makes the following contributions:
We present an MPC design which is robust for a \textit{chosen} bound on the input error. Furthermore, we propose a validation method based on Hoeffding's Inequality to provide a statistical bound on the approximation error of the approximated MPC. By combining these approaches, we obtain a complete framework to learn an AMPC from offline samples in an automatic fashion. The framework is suited for nonlinear systems with constraints, can incorporate a user defined cost function, and supports high sampling rates on cheap hardware.  The framework is demonstrated on a numerical benchmark example, where the AMPC is represented by an NN.

\paragraph*{Related work} 
In the literature, there exist several approaches to obtain offline an approximate solution for an MPC.
\change{}{In~\cite{goebel2017semi}, a semi-explicit MPC for linear systems is presented that ensures stability and constraint satisfaction with a decreased online
	computational demand, by combining a subspace clustering with a
	feasibility refinement strategy.}
In \cite{domahidi2011learning}, a learning algorithm is presented with additional constraints to guarantee stability and constraint satisfaction of the approximate MPC for linear systems. 
\change{This stands}{Both approaches stand} in contrast to our approach, which can be used with any learning method and supports nonlinear systems.

One approach to approximate a nonlinear MPC is convex multi-parametric nonlinear programming \cite{johansen2004approximate,grancharova2009computation}. Approximating an MPC by NNs, as also done herein, has for example been proposed in \cite{parisini1998nonlinear,aakesson2006neural}. In contrast to the proposed framework, these approaches cannot \textit{guarantee} stability and constraint satisfaction for the resulting AMPC.
In \change{\cite{chakrabarty2017support}}{\cite{chakrabarty2017support,canale2009fast}}, it is shown, that guarantees on stability and constraint satisfaction are preserved for arbitrary small approximation errors (due to inherent robustness properties). 
In \cite{pin2013approximate}, an MPC with Lipschitz based constraint tightening is learned that provides guarantees for a non vanishing approximation error. 
The admissible approximation error deduced in \change{\cite{chakrabarty2017support,pin2013approximate}}{\cite{chakrabarty2017support,canale2009fast,pin2013approximate}} is typically not achievable (compare example) and is thus not suited within proposed framework. 

Neural networks have recently been increasingly popular as control policies for complex tasks, \cite{arulkumaran2017brief}.  One powerful framework for training (deep) NN policies is guided policy search (GPS) \cite{LeFiDaAb16},  which uses trajectory-centric control to generate samples for training the NN policy via supervised learning.  In \cite{zhang2016learning}, an MPC is used with GPS for generating the training samples.  While that work is conceptually similar, it does not provide closed-loop guarantees for the learned controller as we do herein.
The main novelty of the proposed framework lies in the approximation of a \textit{robust} control technique and its combination with a suitable validation approach. The validation method has resemblances to \cite{tempo1997probabilistic}, where statistical guarantees for robustness are derived based on the Chernoff bound.

\paragraph*{Outline} This paper is structured as follows: We formulate the problem and present our main idea in Section \ref{sec_approach}. In Section \ref{sec_rob}, the RMPC design is presented. In Section \ref{sec_learning}, we present a validation method  and propose a procedure to learn an MPC with guarantees. A numerical example to show the applicability of the proposed framework is given in Section \ref{sec_example}. Section \ref{sec_conclusion} concludes the paper.

\paragraph*{Notation} \change{}{
	The euclidean norm with respect to a positive definite matrix $Q=Q^\top$ is denoted by $\|x\|_Q^2=x^\top Q x$.} The Pontryagin set difference is defined by $ X \ominus Y := \left\lbrace z \in \mathbb{R}^n: z+y\in X, \forall y \in Y \right\rbrace $. For $X$ a random variable,  $\mathbb{E}(X)$ denotes the expectation of $X$, and $\mathbb{P}\left[X \geq x\right]$ denotes the probability for $X \geq x$. \change{}{We denote $\mathbf{1}_p=[1,..,1]^\top\in\mathbb{R}^p$.
	}.

\section{Main approach}
\label{sec_approach}
In this section, we pose the control problem and describe the proposed approach. 
\subsection{Problem formulation}
We consider the following nonlinear discrete-time system
\begin{equation}
\label{eq_sys}
x(t+1) = f(x(t),u(t))
\end{equation}
with the state $x(t) \in \mathbb{R}^n$, the control input $u(t) \in \mathbb{R}^m$, the time step $t \in \mathbb{N}$, and $f$ continuous with $f(0,0) = 0$. We consider compact polytopic constraints 
\begin{equation}
\nonumber
\mathcal{X} = \left\lbrace x \in \mathbb{R}^n|Hx \leq \textbf{1}_p\right\rbrace,~ \mathcal{U} = \left\lbrace u \in \mathbb{R}^m|Lu \leq \textbf{1}_q\right\rbrace.
\end{equation} 
The control objective is to ensure stability of $x = 0$, constraint satisfaction, i.e. $(x(t),u(t)) \in \mathcal{X} \times \mathcal{U}~\forall t \geq 0$, and optimize some cost function. We shall consider these objectives under initial conditions $x(0) \in \mathcal{X}_\text{feas}$, where $\mathcal{X}_\text{feas}$ is a feasible set to be made precise later.
The resulting controller should be implementable on cheap hardware for systems with high sampling rates.

\subsection{General approach}

The controller synthesis with the proposed framework works as follows: An MPC is designed for \eqref{eq_sys} that is robust to inaccurate inputs $u$ within chosen bounds. The RMPC is sampled offline over the set of feasible states $\mathcal{X}_\text{\normalfont feas}$ and approximated using supervised learning techniques based from these samples. In this paper, we will use NNs to approximate the RMPC, but any other supervised learning technique \change{}{or regression method} can be used likewise.
The learning yields an AMPC $\pi_\text{\normalfont approx}: \mathcal{X}_\text{\normalfont feas} \rightarrow \mathcal{U}~| u =  \pi_\text{\normalfont approx}(x)$.
 With this controller, the closed-loop system is given by
\begin{equation}
\label{eq_closed_loop}
x(t+1) = f(x(t),\pi_\text{\normalfont approx}(x(t))).
\end{equation}
Stability of the closed loop \eqref{eq_closed_loop} is guaranteed if the approximation error is below the admissible bound on the input disturbance from the RMPC. We use a validation method based on Hoeffding's Inequality to guarantee this bound.



\section{Input Robust MPC}
\label{sec_rob}
In this section, we present an input robust MPC design with robust guarantees on stability and constraint satisfaction for bounded additive input disturbances. To achieve the robustness, we combine a standard MPC formulation~\cite{chen1998quasi} with a robust constraint tightening \cite{koehler2017novel}. The RMPC optimization problem can be formulated as   
\begin{subequations}
	\label{eq_opt}
		\begin{align}
	\label{eq_qihmpc}
	\underset{u(\cdot|t)}{\min }&\sum_{k = 0}^{N-1} \norm{x(k|t)}^2_Q + \norm{u(k|t)}^2_R + \norm{x(N|t)}^2_P\\
	s. t.~&
	x(0|t) = x(t),\quad x(N|t) \in  \mathcal{X}_f, \\	
	&x(k+1|t) = f(x(k|t),u(k|t)),	\label{eq_mpc_constr} \\
	&x(k|t) \in \bar{\mathcal{X}}_k, \quad u(k|t) \in  \bar{\mathcal{U}}_k,~ k =0,...,N-1	
	\end{align}
\end{subequations}
We denote the set of states $x$ where \eqref{eq_opt} is feasible by $\mathcal{X}_\text{\normalfont feas}$. The solution to the RMPC optimization problem~\eqref{eq_opt} is denoted by $u^*(\cdot|t)$. The MPC feedback law is $\pi_\text{\normalfont MPC}(x(t)):=u^*(0|t)$. The state and input constraints from standard MPC formulations are replaced by tightened constraints $ \bar{\mathcal{X}}_k$ and $\bar{\mathcal{U}}_k $. This tightening guarantees stability and constraint satisfaction despite bounded input errors and 
will be discussed in more detail in the following. The positive definite matrices $Q$ and $R$ are design parameters. The design of the terminal ingredients $\mathcal{X}_f$ and $P$ will be made precise later. 
 The closed-loop of the RMPC under input disturbances is given by 
 \begin{equation}
 \label{eq_cl_rob}
 x(t+1) = f(x(t),\pi_\text{\normalfont MPC}(x(t)) + d(t))
 \end{equation}
 with $d(t) \in \mathcal{W} = \left\lbrace d\in\mathbb{R}^m : \norm{d}_\infty \leq \eta \right\rbrace,~\forall t \geq 0$ for some $\eta$. The following assumption will be used in order to design the RMPC:
\begin{assumption}
	\label{loc_inc_stab}
	(Local incremental stabilizability \cite{koehler2017novel,koehler201?reference}) 	
	There exists a control law $\kappa: \mathcal{X}\times \mathcal{X} \times \mathcal{U}\rightarrow\mathbb{R}^m$,\linebreak a $\delta$-Lyapunov function $V_{\delta}: \mathcal{X}\times \mathcal{X} \times\mathcal{U}\rightarrow\mathbb{R}_{\geq 0 } $, that is continuous in the first argument and satisfies \linebreak $V_\delta (x,x,v)=0 ~\forall x\in\mathcal{X},~\forall v\in \mathcal{U} $, and parameters $c_{\delta,l},$ $ c_{\delta,u}$ $ \delta_\text{\normalfont loc}, $ $k_{\max} \in \mathbb{R}_{>0}, \rho\in (0,1)$, such that the following properties hold for all $(x,z,v) \in \mathcal{X} \times \mathcal{X} \times \mathcal{U}, (z^+,v^+) \in \mathcal{X} \times \mathcal{U}$ with $V_\delta(x,z,v) \leq \delta_\text{\normalfont loc} $: 
	\begin{align}
	c_{\delta,l} \norm{x-z}^2 \leq V_\delta(x,z,v) &\leq c_{\delta,u} \norm{x-z}^2, \label{eq_Vdel} \\	
	\norm{\kappa(x,z,v)-v} &\leq k_{\max} \norm{x-z}, \label{eq_kmax} \\	
	V_\delta(x^+,z^+,v^+) &\leq \rho V_\delta (x,z,v) \label{eq_contr} 
	\end{align}
	with $x^+ = f(x,\kappa (x,z,v)),~z^+= f(z,v)$.

\end{assumption}
\change{Assumption \ref{loc_inc_stab} is, for example, fulfilled if the linearized dynamics at any point $(x,v) \in \mathcal{X} \times \mathcal{U}$ are stabilizable with a common quadratic Lyapunov function $V = x^\top P x$ and $f$ is locally Lipschitz. Hence, Assumption~\ref{loc_inc_stab} is not very restrictive.}{This assumption is quite general. Sufficient conditions for this property can be formulated based on the linearization, provided that $f$ is locally Lipschitz, compare \cite{koehler201?reference}.}  
The concept of incremental stability \cite{koehler2017novel} describes an incremental robustness property, and is thus suited for the constraint tightening along the prediction horizon. To overestimate the influence from the system input on the system state, we use the following assumption:
\begin{assumption}
		\label{as_lip}
	(Local Lipschitz continuity) There exists a $\lambda \in \mathbb{R}$, such that $\forall x \in \mathcal{X}, \forall u \in \mathcal{U}, \forall  u+d \in \mathcal{U}$  
	\begin{equation}
	\norm{f(x,u+ d ) - f(x,u)} \leq \lambda \norm{ d }_\infty.
	\end{equation}
\end{assumption}
With this assumption, we can introduce a bound on the admissible input disturbance which will be used in the proof of robust stability and recursive feasibility for the RMPC:
\begin{assumption} 
	\label{as_umax}
(Bound on the input disturbance) The input disturbance bound satisfies
  $\eta \leq  \frac{1}{\lambda} \sqrt{\frac{\delta_\text{\normalfont loc}}{c_{\delta, u}}}.$
\end{assumption}
To guarantee robust constraint satisfaction we use a growing tube inspired constraint tightening, based on the incremental stabilizability property in Assumption \ref{loc_inc_stab} as in \cite{koehler2017novel}.
Consider the polytopic tightened set $\mathcal{U}_t = \mathcal{U} \ominus \mathcal{W}  \linebreak =\left\lbrace u \in \mathbb{R}^m: L_t u \leq \textbf{1}_p\right\rbrace$.
 This set ensures that $u+d \in \mathcal{U}, ~\forall u\in \mathcal{U}_t,~\forall d \in \mathcal{W}$. 
We set
\begin{equation}
\label{eq_eps_bound}
\epsilon := \eta \lambda \sqrt{\frac{c_{\delta,u}}{c_{\delta,l}}}\max\left\lbrace \norm{H}_{\infty}, \norm{L_t}_{\infty} k_{\max} \right\rbrace.
\end{equation} The constraint tightening is achieved with a scalar tightening parameter $\epsilon_k := \epsilon \frac{1-\sqrt{\rho}^k}{1-\sqrt{\rho}}, ~ k \in \{ 0, ..., N\}$ based on the asymptotic decay rate $\rho$ and  $\epsilon$.
The tightened constraint sets are given by
\begin{equation}
\nonumber
\bar{\mathcal{X}}_k := (1-\epsilon_k) \mathcal{X} = \{x \in \mathbb{R}^n : Hx \leq (1-\epsilon_k) \textbf{1}_p  \},
\end{equation}
\begin{equation}
\nonumber
\bar{\mathcal{U}}_k := (1-\epsilon_k) \mathcal{U}_t  = \{u \in \mathbb{R}^m : L_tu \leq (1-\epsilon_k) \textbf{1}_q  \}.
\end{equation}
  \change{}{This constraint tightening can be thought of as an over approximation of the constraint tightening used in \cite{chisci2001systems}, with the difference that it can be easily applied to nonlinear systems, compare \cite{koehler2017novel}}. Clearly, the size of the tightened constraints depends on $\epsilon$ and thus on the size of the bound on the input disturbance~$\eta$. The maximum influence of the input disturbance $d$ on the predicted state $x(N|t)$ is bounded by \linebreak $
\mathcal{W}_N = \left\lbrace x \in \mathbb{R}^n : \norm{x} \leq  \lambda \eta \sqrt{\rho^{N}\frac{c_{\delta,u}}{c_{\delta, l}}}  \right\rbrace.
$
 We use the following Assumption on the terminal set to guarantee recursive feasibility and closed-loop stability similar to \cite{chen1998quasi}:
\begin{assumption}
	\label{as_term}
	(Terminal set)  There exists a local control Lyapunov function $V_f(x) = \norm{x}^2_P$, a terminal set $\mathcal{X}_f = \left\lbrace x:V_f(x) \leq \alpha_f \right\rbrace$  and a control law $k_f(x)$, such that$~\forall x \in~\mathcal{X}_f:$
	\begin{align}
	f(x, k_f(x)) + w &\in \mathcal{X}_f,~ \forall w \in \mathcal{W}_N, \label{eq_term_inv} \\
	V_f (f(x,k_f(x))) &\leq V_f(x) - (\norm{x}^2_Q + \norm{k_f(x)}^2_R),	\label{eq_term_constr} \\
	(x,k_f(x)) &\subseteq (\bar{\mathcal{X}}_N  \times \bar{\mathcal{U}}_N). \label{eq_term_sets} 
	\end{align}	
\end{assumption}
It is always possible to design $P,~\mathcal{X}_f$ and $k_f$ to satisfy Assumption~\ref{as_term}, if the linearization of the system is stabilizable, $(0,0)$ lies in the interior of $\mathcal{X} \times \mathcal{U}$ and $\eta$ is small enough.
Now, we are ready to state a theorem that guarantees stability and constraint satisfaction for the RMPC despite bounded input disturbances:
\begin{theorem}
	\label{theo_robust}
	Let Assumption \ref{loc_inc_stab}, \ref{as_lip}, \ref{as_umax}  and \ref{as_term} hold.  Then, for all initial conditions $x(0) \in \mathcal{X}_\text{\normalfont feas}$, the RMPC closed-loop \eqref{eq_cl_rob} satisfies $(x(t),u(t)) \in \mathcal{X}_\text{\normalfont feas} \times \mathcal{U}~ \forall t\geq0$.  Furthermore, the closed-loop system \eqref{eq_cl_rob} converges to a robust positive invariant set $\mathcal{Z}_\text{\normalfont RPI}$ around the origin.
\end{theorem}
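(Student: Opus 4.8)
The plan is the standard two‑part robust‑MPC argument: (i) recursive feasibility of~\eqref{eq_opt} along~\eqref{eq_cl_rob}, proved by exhibiting a feasible candidate input sequence at time $t+1$ built from the optimizer at time $t$, and (ii) convergence, obtained by treating the optimal cost $V_N$ of~\eqref{eq_opt} as an ISS‑type Lyapunov function.

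For (i), let $x(t)\in\mathcal{X}_{\text{feas}}$ with optimizer $u^*(\cdot|t)$, nominal predictions $x^*(\cdot|t)$ and $x^*(N|t)\in\mathcal{X}_f$. Since $x(t+1)=f(x(t),u^*(0|t)+d(t))$, Assumption~\ref{as_lip} gives $\|x(t+1)-x^*(1|t)\|\le\lambda\eta$, so with~\eqref{eq_Vdel} and Assumption~\ref{as_umax} we obtain $V_\delta(x(t+1),x^*(1|t),u^*(1|t))\le c_{\delta,u}\lambda^2\eta^2\le\delta_{\text{loc}}$, i.e.\ we are inside the region where Assumption~\ref{loc_inc_stab} is active. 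I would take the candidate $\hat u(k|t+1):=\kappa(\hat x(k|t+1),x^*(k+1|t),v_k)$ with $v_k:=u^*(k+1|t)$ for $k\le N-2$ and $v_{N-1}:=k_f(x^*(N|t))$, where $\hat x(\cdot|t+1)$ is the trajectory generated by these inputs from $\hat x(0|t+1)=x(t+1)$. A joint induction over $k$ — establishing at each step first the constraint memberships (below) so that~\eqref{eq_contr} may be invoked, then the contraction along the shifted nominal trajectory — yields $V_\delta(\hat x(k|t+1),x^*(k+1|t),v_k)\le\rho^k c_{\delta,u}\lambda^2\eta^2$, hence $\|\hat x(k|t+1)-x^*(k+1|t)\|\le\lambda\eta\sqrt{\rho^k c_{\delta,u}/c_{\delta,l}}$ and, via~\eqref{eq_kmax}, $\|\hat u(k|t+1)-v_k\|\le k_{\max}\lambda\eta\sqrt{\rho^k c_{\delta,u}/c_{\delta,l}}$.

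The candidate then satisfies the tightened constraints by propagating these deviations. Using $x^*(k+1|t)\in\bar{\mathcal{X}}_{k+1}$ and $v_k\in\bar{\mathcal{U}}_{k+1}$ (the terminal case from~\eqref{eq_term_sets}), the componentwise bound $H\hat x(k|t+1)\le(1-\epsilon_{k+1})\mathbf{1}_p+\|H\|_\infty\lambda\eta\sqrt{\rho^k c_{\delta,u}/c_{\delta,l}}\,\mathbf{1}_p$ and the analogous one for $L_t\hat u(k|t+1)$ with $\|L_t\|_\infty k_{\max}$, together with the identity $\epsilon_{k+1}-\epsilon_k=\epsilon\sqrt{\rho}^{\,k}$, reduce $\hat x(k|t+1)\in\bar{\mathcal{X}}_k$ and $\hat u(k|t+1)\in\bar{\mathcal{U}}_k$ exactly to $\epsilon\ge\eta\lambda\sqrt{c_{\delta,u}/c_{\delta,l}}\max\{\|H\|_\infty,\|L_t\|_\infty k_{\max}\}$, which is~\eqref{eq_eps_bound}. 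For the terminal constraint, the step‑$N$ deviation $\|\hat x(N|t+1)-f(x^*(N|t),k_f(x^*(N|t)))\|\le\lambda\eta\sqrt{\rho^N c_{\delta,u}/c_{\delta,l}}$ is exactly the radius of $\mathcal{W}_N$, so $\hat x(N|t+1)\in\mathcal{X}_f$ by~\eqref{eq_term_inv}. Hence $x(t+1)\in\mathcal{X}_{\text{feas}}$; induction from $x(0)\in\mathcal{X}_{\text{feas}}$ gives $x(t)\in\mathcal{X}_{\text{feas}}$ for all $t$, while $u(t)=u^*(0|t)+d(t)\in\mathcal{U}_t\oplus\mathcal{W}\subseteq\mathcal{U}$ since $\epsilon_0=0$ and $\bar{\mathcal{U}}_0=\mathcal{U}_t=\mathcal{U}\ominus\mathcal{W}$.

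For (ii), bounding $V_N(x(t+1))$ by the candidate cost, cancelling the tail through the terminal decrease~\eqref{eq_term_constr}, and estimating the remaining terms with the deviation bounds above and the quadratic (hence locally Lipschitz on the compact constraint set) growth of the stage and terminal costs yields $V_N(x(t+1))\le V_N(x(t))-\|x(t)\|_Q^2+c\eta$ for a constant $c$ independent of $t$; standard ISS‑Lyapunov arguments as in~\cite{koehler2017novel,rawlings2009model} then give a sublevel set $\mathcal{Z}_{\text{RPI}}=\{x\in\mathcal{X}_{\text{feas}}:V_N(x)\le\beta\}$, with $\beta$ determined by $c\eta$, that is robust positive invariant for~\eqref{eq_cl_rob} and attracts every trajectory started in $\mathcal{X}_{\text{feas}}$. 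The hard part is the bookkeeping in the third paragraph: the candidate must be designed so that the step‑$N$ deviation matches the radius of $\mathcal{W}_N$ and the per‑step deviations match the geometric increments $\epsilon_{k+1}-\epsilon_k$, which is precisely what forces the specific form of $\epsilon$ and of the tightening $\epsilon_k$, and where Assumptions~\ref{loc_inc_stab}--\ref{as_term} are all needed at once; the convergence part is comparatively routine once this construction is in place.
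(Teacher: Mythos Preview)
Your proposal is correct and is precisely the adaptation of~\cite{koehler2017novel} that the paper invokes; the paper itself gives no detailed argument but refers to~\cite{koehler2017novel,hertneck2018learning} for the construction. The candidate sequence built from the incremental-stabilizability controller $\kappa$, the geometric deviation bounds matched to $\epsilon_{k+1}-\epsilon_k=\epsilon\sqrt{\rho}^{\,k}$ and to the radius of $\mathcal{W}_N$, and the ISS-type decrease of $V_N$ with residual $c\eta$ are exactly the ingredients spelled out in those references.
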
 
\begin{proof}
	The proof is a straightforward adaption of \cite{koehler2017novel}, for details see \cite{hertneck2018learning}.
\end{proof}

\begin{remark}
	The size of $\mathcal{Z}_\text{\normalfont RPI}$ depends on the input error bound $\eta$. If $\eta$ is chosen small enough, $\mathcal{Z}_\text{\normalfont RPI} \subseteq \mathcal{X}_f$ and asymptotic stability of the origin can be guaranteed by applying the terminal controller in the terminal set.
\end{remark}

\section{Learning  the RMPC}
\label{sec_learning}
In this section, we discuss how the RMPC can be approximated with supervised learning methods. We also present a method for the validation of $\pi_\text{\normalfont approx}$ based on Hoeffding's Inequality to obtain guarantees on stability and constraint satisfaction.

\subsection{Supervised learning}
\label{sec:NNapprox}
To learn the RMPC, we generate an arbitrary number of samples $(x,\pi_\text{\normalfont MPC}(x)) \in \mathcal{X}_\text{\normalfont feas}\times \mathcal{U}$.  Supervised learning methods can be used to obtain the approximation $\pi_\text{\normalfont approx}$ of $\pi_\text{\normalfont MPC}$. 
To preserve guarantees from the RMPC under the approximation, we will show that 
\begin{equation}
\label{eq_approx_err}
\norm{\pi_\text{\normalfont approx}(x) - \pi_\text{\normalfont MPC}(x)}_\infty \leq \eta
\end{equation}
holds for all relevant states with the chosen $\eta$ from Assumption 3.
An approximation with a sufficient small approximation error is possible with state-of-the-art  machine learning techniques; a comprehensive overview of possible methods is given in e.g.\ \cite{shalev2014understanding}.
Equation \eqref{eq_approx_err} implies $\pi_\text{\normalfont approx}(x) = \pi_\text{\normalfont MPC} (x) + d$ with $\norm{d}_\infty \leq \eta$. Thus, Theorem \ref{theo_robust} guarantees stability and constraint satisfaction for $\pi_\text{\normalfont approx}$ if \eqref{eq_approx_err} holds.

Neural networks (NNs) are a popular and powerful method for supervised learning; see standard literature such as \cite{shalev2014understanding,GoBeCo16}. 
 Specifically, NNs have successfully been employed as control policies (see `Related work').  While we also consider NNs in the example in Section \ref{sec_example}, the proposed approach equally applies to any regression or function approximation technique  \change{}{such as \cite{chakrabarty2017support,canale2009fast}}.

It is well known that relevant classes of NNs are universal approximators; that is, they can \emph{in principle}\footnote{It can still be challenging to actually train an NN to desired accuracy in practice, e.g., because the number of required hidden units is unknown, and typically only local optima are found during training.} approximate any sufficiently regular function to arbitrary accuracy provided that the network has sufficiently many hidden units, \cite{GoBeCo16}.  However, it is in general difficult to provide an \emph{a-priori} guarantee that a learned NN satisfies the desired bound on the approximation error $\eta$. To overcome this problem, we propose a validation method based on statistical learning bounds in the next subsection.

\subsection{Probabilistic guarantees} 
In this subsection, we propose a probabilistic method to validate an approximator $\pi_\text{\normalfont approx}$. 
For the validation, we will consider trajectories of the system \eqref{eq_closed_loop}, which is controlled by the approximate MPC.  We introduce
\begin{align}
 \label{eq_def_trajectory} X_{i} := \{ &x(t), t \in \{0, \dots, T_i\} \, : \,  x(0) = x_i \in \mathcal{X}_\text{\normalfont feas}, \, \\ & x(T_i) \in \mathcal{X}_f\,   \nonumber \text{\normalfont and}~x(t+1) = f(x(t),\pi_\text{\normalfont approx}(x(t))) \}
\end{align}
to denote a trajectory of \eqref{eq_closed_loop} starting at $x(0) = x_i \in \mathcal{X_\text{\normalfont feas}}$ and ending in $\mathcal{X}_f$, where we can guarantee stability and constraint satisfaction with the terminal controller.
Then, let
\begin{equation*}
I(X_i) := 
\begin{cases}
1 & \text{\normalfont if $\norm{\pi_\text{\normalfont MPC}(x)-\pi_\text{\normalfont approx}(x)}_\infty \leq \eta, \, \forall x \in X_i$} \\
0 & \text{\normalfont otherwise}
\end{cases}
\end{equation*}
be an indicator function, which indicates whether a learned control law $\pi_{\text{\normalfont approx}}$ satisfies the posed accuracy $\eta$ along a trajectory until the terminal set is reached. 

For the validation, we consider $p$ trajectories $X_j$, $j=1,...,p$ with initial conditions $x(0)$ independently sampled from some distribution $\Omega$ over $\mathcal{X_\text{\normalfont feas}}$. 
Because the initial conditions are independent, identically distributed (iid), also $X_j$ and thus $I(X_j)$ are iid.  Next, we state a statistical bound for the approximation accuracy of $\pi_\text{approx}$ along iid trajectories \eqref{eq_def_trajectory}.

Define the empirical risk as 
\begin{equation}
\label{eq_emp_risk}
\tilde{\mu}  := \frac{1}{p} \sum_{j = 1}^{p} I(X_j).
\end{equation} 
 The RMPC guarantees stability and constraint satisfaction if
 \begin{equation}
 I(X_i)=1, \, \forall X_i \, \text{with} \, x(0)=x_i \in \mathcal{X}_\text{feas}
 \end{equation}
 holds.  The probability for $I(X_i)=1$ is $\mu := \mathbb{P} [I(X_i)=1]$ \linebreak 
for $X_i$  with 
iid initial condition $x_j(0) \in \mathcal{X}_\text{\normalfont feas}$ from the distribution $\Omega$. Thus, $\mu$ is a lower bound for the probability of stability and constraint satisfaction. We can use Hoeffding's Inequality to estimate $\mu$ from the empirical risk $\tilde{\mu}$:
\begin{lemma}
	\label{theo_hoeff}
	(Hoeffding's Inequality \cite[pp.~667-669]{von2011statistical}) \linebreak
	Let $I(X_j) ~j=1,...,p$ be $p$ iid random variables with \linebreak $0\leq I(x_j) \leq 1$.  Then,
	\begin{equation}
	\label{eq_hoeffding}
	\mathbb{P} \left[ \abs{ \tilde{\mu}- \mu}  \geq \varepsilon_h \right]  \leq   2\exp(-2p\varepsilon_h^2).
	\end{equation} 
\end{lemma}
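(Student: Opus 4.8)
The plan is to prove Hoeffding's Inequality directly using the exponential Markov (Chernoff) bounding technique, treating only the upper tail $\mathbb{P}[\tilde\mu - \mu \geq \varepsilon_h]$ and then obtaining the two-sided bound by symmetry. First I would write $Y_j := I(X_j) - \mathbb{E}[I(X_j)]$, so that the $Y_j$ are independent, zero-mean, and bounded in an interval of length $1$ (since $0 \le I(X_j) \le 1$). The quantity of interest is $\tilde\mu - \mu = \frac{1}{p}\sum_{j=1}^p Y_j$, so I want to bound $\mathbb{P}\!\left[\sum_{j=1}^p Y_j \geq p\varepsilon_h\right]$.

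The key steps, in order: (i) For any $s > 0$, apply Markov's inequality to the nonnegative random variable $\exp\!\left(s\sum_j Y_j\right)$ to get $\mathbb{P}\!\left[\sum_j Y_j \geq p\varepsilon_h\right] \leq e^{-s p \varepsilon_h}\,\mathbb{E}\!\left[\exp\!\left(s\sum_j Y_j\right)\right]$. (ii) Use independence to factor the expectation as $\prod_{j=1}^p \mathbb{E}\!\left[e^{s Y_j}\right]$. (iii) Establish Hoeffding's Lemma: for a zero-mean random variable $Y$ taking values in an interval of length $\ell$, one has $\mathbb{E}[e^{sY}] \leq \exp(s^2 \ell^2 / 8)$; here $\ell = 1$, so $\mathbb{E}[e^{sY_j}] \leq \exp(s^2/8)$. (iv) Combine to obtain $\mathbb{P}\!\left[\sum_j Y_j \geq p\varepsilon_h\right] \leq \exp\!\left(-s p \varepsilon_h + p s^2/8\right)$. (v) Optimize over $s > 0$: the exponent is minimized at $s = 4\varepsilon_h$, giving the bound $\exp(-2 p \varepsilon_h^2)$. (vi) Apply the same argument to $-Y_j$ to bound the lower tail $\mathbb{P}[\tilde\mu - \mu \leq -\varepsilon_h]$, and add the two via a union bound to get the factor $2$ in \eqref{eq_hoeffding}.

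For step (iii), the proof of Hoeffding's Lemma goes via convexity: if $Y \in [a,b]$ with $b - a = \ell$ and $\mathbb{E}[Y]=0$, then by convexity of $t \mapsto e^{st}$ one bounds $e^{sY} \leq \frac{b-Y}{b-a} e^{sa} + \frac{Y-a}{b-a} e^{sb}$; taking expectations and using $\mathbb{E}[Y]=0$ yields $\mathbb{E}[e^{sY}] \leq \frac{b}{b-a}e^{sa} - \frac{a}{b-a}e^{sb} =: e^{\varphi(s)}$. One then shows $\varphi(0) = \varphi'(0) = 0$ and $\varphi''(s) \leq \ell^2/4$ for all $s$ (the second derivative is a variance of a Bernoulli-type distribution, hence at most $\ell^2/4$), so a second-order Taylor expansion with remainder gives $\varphi(s) \leq s^2\ell^2/8$.

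The main obstacle is really just the bookkeeping in Hoeffding's Lemma, specifically verifying $\varphi''(s) \leq \ell^2/4$ uniformly in $s$; everything else (Markov, independence, optimization over $s$, the symmetrization for the two-sided bound) is routine. Since the statement is quoted with a reference to \cite[pp.~667-669]{von2011statistical}, an acceptable alternative is simply to cite that source rather than reproduce the argument; but the self-contained Chernoff-plus-Hoeffding-Lemma route above is short enough to include in full.
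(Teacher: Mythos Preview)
Your proposal is correct and follows the standard Chernoff--Hoeffding argument. Note, however, that the paper does not actually prove Lemma~\ref{theo_hoeff}; it merely states the inequality and cites \cite[pp.~667--669]{von2011statistical}, treating it as a known result. So your self-contained proof via Markov's inequality, factorization by independence, Hoeffding's Lemma ($\varphi'' \leq \ell^2/4$), optimization over $s$, and symmetrization goes well beyond what the paper itself provides. Either route---your full argument or a bare citation, as you anticipated in your last paragraph---is acceptable here; the paper chose the latter.
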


Denote  $\delta_h := 2\exp(-2p\varepsilon_h^2)$ as the confidence level.
 Then \eqref{eq_hoeffding} implies that with confidence of at least $1-\delta_h$, 
 \begin{equation}
 \label{eq_mu_eps}
 \mathbb{P}[I(X_i)=1] = \mu \geq \tilde{\mu} - \varepsilon_h.
 \end{equation}
 Hence, with confidence $1-\delta_h$, the probability that the approximation error is below the chosen bound $\eta$ along a trajectory with a random initial condition 
 from $\Omega$ is larger than $\tilde{\mu} - \varepsilon_h$.
 We can use this 
 to establish a validation method to guarantee a chosen bound $\mu_\text{\normalfont crit}\leq \mathbb{P} \left[I(X_i) = 1\right]$  and a chosen confidence $\delta_h$. If, for a number of samples $p$, the empirical risk $\tilde{\mu}$ satisfies 
 \begin{equation}
 	\label{eq_mu_crit}
 	\mu_\text{\normalfont crit} \leq  \tilde{\mu} - \varepsilon_h = \tilde{\mu} - \sqrt{-\frac{\ln(\frac{\delta_h}{2})}{2p}},
 \end{equation} we can rewrite \eqref{eq_mu_eps} as $\mathbb{P} \left[ I(X_i) = 1 \right] \geq \mu_\text{\normalfont crit}$, which holds at least with confidence level $1-\delta_h$. We use this for validation as follows: for chosen desired confidence $\delta_h$ and $\mu_\text{\normalfont crit}$, we compute $\tilde{\mu}$ and $\varepsilon_h$  for a given number $p$ of samples. If \eqref{eq_mu_crit} holds for this $p$, the validation is successful. If \eqref{eq_mu_crit} does not hold for this $p$, the number of samples for the validation $p$ is increased, which decreases $\varepsilon_h$. The validation is then repeated iteratively while increasing  $p$. 
 We say the validation is failed, if $p$ exceeds a maximum number of samples $p_{\max}$ and stop the validation. Then the learning has to be repeated and improved (to increase $\tilde{\mu}$). \change{The proposed}{This} validation method is independent of  the chosen learning method.

Given \change{this}{the proposed} validation method, the overall procedure for the computation of an AMPC is summarized in Algorithm~1. The following theorem ensures stability and constraint satisfaction of the resulting learned AMPC. 

\begin{table}[!t]
	\newcolumntype{L}[1]{>{\raggedright\let\newline\\\arraybackslash\hspace{0pt}}m{#1}}	\vspace{0.2 cm}
\begin{tabular}{L{0.95\linewidth}}

\hline

\textbf{Algorithm 1} Learn approximate control law\\
\hline
\begin{enumerate}
\item Show incremental stabilizability. Compute $\lambda$  \newline (Assumption \ref{loc_inc_stab}, \ref{as_lip}).	
	\item Choose an accuracy $\eta$ (Assumption \ref{as_umax}).
	\item Design the RMPC: 
	\begin{enumerate}	
		\item Set $\epsilon$  according to \eqref{eq_eps_bound}.
		\item Compute terminal ingredients (Assumption \ref{as_term}).
		\item Check whether  \eqref{eq_term_inv} from Assumption \ref{as_term} holds. \newline  If not, decrease $\eta$.
	\end{enumerate}
		\item Learn $\pi_\text{\normalfont approx} \approx \pi_\text{\normalfont MPC}$, e.g. with a NN.
		\item Validate $\pi_\text{\normalfont approx}$ according to Lemma \ref{theo_hoeff}. 
		\item If the validation fails, repeat the learning from \linebreak step 4).
\end{enumerate}\\
\hline 

\end{tabular}

\end{table}

%

\begin{theorem}
	\label{lem_algo}
	Let Assumptions \ref{loc_inc_stab}, \ref{as_lip}, \ref{as_umax} and \ref{as_term} hold. Suppose that Algorithm 1 is used with suitable chosen $\eta, \mu_\text{\normalfont crit}, \delta_h$ and $p_{\max}$ and with an approximation method that can achieve an approximation error smaller than the required bound $\eta$. Then Algorithm 1 terminates. With a confidence level of $1-\delta_h$, the resulting approximate MPC ensures closed-loop stability and constraint satisfaction for a fraction of $\mu_\text{\normalfont crit}$ of the random initial conditions distributed according to $\Omega$.   	
\end{theorem}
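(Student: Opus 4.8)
The plan is to combine the robustness statement of Theorem~\ref{theo_robust} with the statistical bound of Lemma~\ref{theo_hoeff}, and to split the argument into two parts: (i) that Algorithm~1 terminates, and (ii) the claimed closed-loop guarantee. The bridge between the two parts is the elementary observation already used around \eqref{eq_approx_err}: whenever $\|\pi_\text{\normalfont approx}(x(t))-\pi_\text{\normalfont MPC}(x(t))\|_\infty\leq\eta$ holds along a trajectory, the learned closed loop \eqref{eq_closed_loop} coincides, on that trajectory, with the disturbed RMPC loop \eqref{eq_cl_rob} for an admissible disturbance realization $d(t)\in\mathcal{W}$; hence all conclusions of Theorem~\ref{theo_robust} (state/input constraint satisfaction, staying in $\mathcal{X}_\text{\normalfont feas}$, convergence to $\mathcal{Z}_\text{\normalfont RPI}$) hold verbatim on that trajectory, which in particular makes $\pi_\text{\normalfont MPC}(x(t))$ well defined along it.

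\emph{Termination.} First I would note that the RMPC design loop (steps 3a--3c) terminates: by the discussion following Assumption~\ref{as_term}, condition \eqref{eq_term_inv} is satisfiable once $\eta$ is small enough (the terminal ingredients being recomputed after each reduction), so with a ``suitably chosen'' initial $\eta$ only finitely many reductions occur. Then, by hypothesis, the regression method attains $\|\pi_\text{\normalfont approx}(x)-\pi_\text{\normalfont MPC}(x)\|_\infty<\eta$ on $\mathcal{X}_\text{\normalfont feas}$, so $I(X_j)=1$ for every sampled validation trajectory and the empirical risk \eqref{eq_emp_risk} equals $\tilde\mu=1$. Since $\varepsilon_h=\sqrt{-\ln(\delta_h/2)/(2p)}$ is strictly decreasing in $p$ with $\varepsilon_h\to0$, and since a suitably chosen $\mu_\text{\normalfont crit}$ is strictly below $1$, there is a finite $p^\star$ with $1-\varepsilon_h\geq\mu_\text{\normalfont crit}$ for all $p\geq p^\star$; choosing $p_{\max}\geq p^\star$, the validation test \eqref{eq_mu_crit} in step~5 succeeds and the algorithm stops.

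\emph{Closed-loop guarantee.} On termination, \eqref{eq_mu_crit} holds for the accepted sample size $p$ and its $\varepsilon_h$, i.e.\ $\mu_\text{\normalfont crit}\leq\tilde\mu-\varepsilon_h$. Since $\pi_\text{\normalfont approx}$ is fixed before the $p$ validation trajectories are generated, and since $f$ and $\pi_\text{\normalfont approx}$ are deterministic, the $X_j$ and hence the $I(X_j)\in\{0,1\}$ are iid functions of the iid initial conditions $x_j(0)\sim\Omega$, so Lemma~\ref{theo_hoeff} applies and yields $\mathbb{P}[\,|\tilde\mu-\mu|\geq\varepsilon_h\,]\leq 2\exp(-2p\varepsilon_h^2)=\delta_h$, with $\mu=\mathbb{P}[I(X_i)=1]$. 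Thus, with confidence at least $1-\delta_h$, $\mu\geq\tilde\mu-\varepsilon_h\geq\mu_\text{\normalfont crit}$, i.e.\ a fraction of at least $\mu_\text{\normalfont crit}$ of the initial conditions drawn from $\Omega$ satisfies $I(X_i)=1$. For any such initial condition the bridging observation and Theorem~\ref{theo_robust} give constraint satisfaction along the trajectory and convergence to $\mathcal{Z}_\text{\normalfont RPI}$; choosing $\eta$ small enough that $\mathcal{Z}_\text{\normalfont RPI}\subseteq\mathcal{X}_f$ (Remark after Theorem~\ref{theo_robust}), the trajectory enters $\mathcal{X}_f$ in finite time, after which the terminal controller keeps the state in $\mathcal{X}_f$ and asymptotically stabilises the origin by Assumption~\ref{as_term}. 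Since $\mathbb{P}[\text{stability and constraint satisfaction}]\geq\mu\geq\mu_\text{\normalfont crit}$, the claim follows.

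\emph{Main obstacle.} The subtle point is not the arithmetic but keeping the event $\{I(X_i)=1\}$ well posed and non-circular: the indicator refers to $\pi_\text{\normalfont MPC}(x(t))$, which is defined only while $x(t)\in\mathcal{X}_\text{\normalfont feas}$, and it presupposes a finite hitting time $T_i$ of $\mathcal{X}_f$; both facts are only guaranteed \emph{a posteriori}, by Theorem~\ref{theo_robust}, precisely on the event where the error stays below $\eta$. The clean way around this is to read $I(X_i)=1$ as the conjunction ``the error is $\leq\eta$ up to the first time $\mathcal{X}_f$ is reached \emph{and} that time is finite,'' and to invoke Theorem~\ref{theo_robust} only afterwards; one should also state explicitly that $\pi_\text{\normalfont approx}$ is frozen before the validation set is sampled, since otherwise the iid hypothesis underlying Lemma~\ref{theo_hoeff} would fail.
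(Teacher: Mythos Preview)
Your proposal is correct and follows essentially the same approach as the paper: invoke Lemma~\ref{theo_hoeff} on the validation data to get $\mu\geq\tilde\mu-\varepsilon_h\geq\mu_\text{\normalfont crit}$ with confidence $1-\delta_h$, then observe that $I(X_i)=1$ puts the learned closed loop in the form \eqref{eq_cl_rob} with $d(t)\in\mathcal{W}$, so Theorem~\ref{theo_robust} delivers stability and constraint satisfaction. The paper's own proof is considerably terser---it simply writes ``Because Algorithm~1 terminates, \eqref{eq_mu_crit} holds'' and then chains Lemma~\ref{theo_hoeff} and Theorem~\ref{theo_robust}---whereas you spell out why termination actually occurs under the stated hypotheses (the $\tilde\mu=1$ and $\varepsilon_h\to0$ argument) and flag the well-posedness subtlety of $I(X_i)$; both additions are sound elaborations rather than a different route.
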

\begin{proof}
Because Algorithm 1 terminates, \eqref{eq_mu_crit} holds. Lemma~\ref{theo_hoeff} implies \eqref{eq_mu_eps}. 
We thus have $\mathbb{P} \left[ I(X_i) = 1\right] \geq \mu_\text{\normalfont crit}$ with confidence at least $1- \delta_h$.  That is, with confidence $1-\delta_h$, for at least a fraction of $\mu_\text{\normalfont crit}$ trajectories $X_i$, we have $I(X_i) = 1$ , which implies stability and constraint satisfaction by Theorem \ref{theo_robust}.
\end{proof}

\section{Example}
\label{sec_example}
In this section, we demonstrate the AMPC scheme with a numerical example. We go step by step through Algorithm 1 and compare the resulting controller to a discrete-time Linear Quadratic Regulator (LQR). We consider a discrete-time version of a continuous stirred tank reactor with 

\begin{equation}
\nonumber
f(\tilde{x},\tilde{u}) = \begin{pmatrix}
\tilde{x}_1 + h \left(\frac{(1-\tilde{x}_1)}{\theta}-k \tilde{x}_1 e^{-\frac{M}{\tilde{x}_2}} \right) \\
\tilde{x}_2 + h\left(\frac{x_f-\tilde{x}_2}{\theta} + k \tilde{x}_1 e^{-\frac{M}{\tilde{x}_2}} -\alpha \tilde{u}(\tilde{x}_2-x_c)\right) 
\end{pmatrix},
\end{equation} where $\tilde{x}_1$ is the temperature, $\tilde{x}_2$ is the concentration, and $\tilde{u}$ is the coolant flow. This system is a common benchmark taken from\footnote{The parameters are $\theta = 20,~ k= 300,~ M = 5,~ x_f = 0.3947,~ x_c = 0.3816,~ \alpha = 0.117,~ h = 0.1$. This corresponds to an Euler discretization with the sampling time $h = 0.1s$ for the system from \cite{mayne2011tube} .} \cite{mayne2011tube}. We consider stabilization of the unstable steady state $x_e = (0.2632, 0.6519)$ with steady-state input $u_e = 0.7853$. To achieve $f(0,0) = 0$, we use the transformed coordinates $x = \tilde{x}-x_e$  and $u = \tilde{u}-u_e$. We consider the constraint sets $\mathcal{X} = [-0.2,0.2] \times [-0.2,0.2],~ \linebreak \mathcal{U} =[0-u_e,2-u_e]$, the stage cost $Q = I, R = 10^{-4}$ and use $N=180$ for the prediction horizon. 

\paragraph*{Step 1: Incremental stabilizability and Lipschitz constant}
The verification of the  local incremental stability assumption can be done according to \cite{koehler201?reference} based on a griding of the constraint set. According to this, Assumption \ref{loc_inc_stab} holds with\footnote{We use $V_\delta (x,z) = \norm{x-z}^2_{P_r}$ and $\kappa (x,z,v) = v + K_r(x-z)$ with matrices $P_r$ and $K_r$ continuous valued in $r = (x,u) \in \mathcal{X} \times \mathcal{U}$. For details on the computation and for numerical values of $P_r$ and $K_r$ see \cite{koehler201?reference,hertneck2018learning}.}  $c_{\delta,l} = 12.33,~ c_{\delta,u} = 199.03,~ k_{\max} = 45.72,~ \rho = 0.9913,~\delta_\text{\normalfont loc} = 0.01$. The Lipschitz constant is $\lambda = 5.5\cdot 10^{-3}$.

\paragraph*{Step 2: Set Accuracy}
 For the approximation accuracy $\eta$, we choose $\eta=5.1\cdot 10^{-3}$, which satisfies Assumption \ref{as_umax}. 
 \paragraph*{Step 3: RMPC design} 
 Instead of  \eqref{eq_eps_bound},  we use the following less conservative bound
 \begin{equation}
 \nonumber
 \epsilon = \eta \lambda \sqrt{c_{\delta,u}} \max \left\lbrace \tilde{k}_{\max} \norm{L_t}_{\infty}, \frac{\norm{H}_{\infty}}{\sqrt{c_{\delta,l}}} \right\rbrace =  2.2\cdot 10^{-3}	
\end{equation}  with $\tilde{k}_{\max} = \|P_r^{-1/2}K_r^\top\|$ (for details see \cite{hertneck2018learning}). Given the stage cost $Q$ and $R$, we compute the terminal cost $P$, the terminal controller $k_f(x)$ and the terminal region $\mathcal{X}_f$ based on the LQR\footnote{The terminal ingredients are 
	$k_f(x) = 
	-46.01 x_1 + 101.74 x_2,$ 
	\begin{equation*}
	P = \begin{pmatrix}
	33.21& -3.61\\
	-3.61& 6.65
	\end{pmatrix},~\alpha_f = 9.2 \cdot{10^{-5}}.
	\end{equation*}}, compare \cite{chen1998quasi}. 

 \paragraph*{Step 4: Learning}
 A  fully connected feedforward NN with two neurons in the first hidden layer and two further hidden layers with 50 neurons each is trained as the AMPC. The activations in the hidden layers are hyperbolic tangent sigmoid functions. In the output layer, linear activations are used. To create the learning samples, the RMPC is sampled over $\mathcal{X}$ with a uniform grid  using Casadi 2.4.2 \cite{Andersson2013b} with Python 2.7.2  with grid size of $2.5 \cdot 10^{-4}$. Therewith, $1.6 \cdot 10^6$ feasible samples of data points  $(x,\pi_\text{\normalfont MPC}(x))$  are generated for learning. The NN \change{}{is initialized randomly and} \change{is}{} trained with the Levenberg Marquardt algorithm 
 using Matlab's neural network toolbox on R2017a.  
 The resulting AMPC over $\mathcal{X}_\text{\normalfont feas}$ is shown in Figure~\ref{fig_surf}.
 \begin{figure}[tb] 
 	\makebox[\linewidth]{ 
 		\newlength{\figurewidth} 
 		\setlength{\figurewidth}{\linewidth} 
 		\includegraphics[height = 4.4 cm]{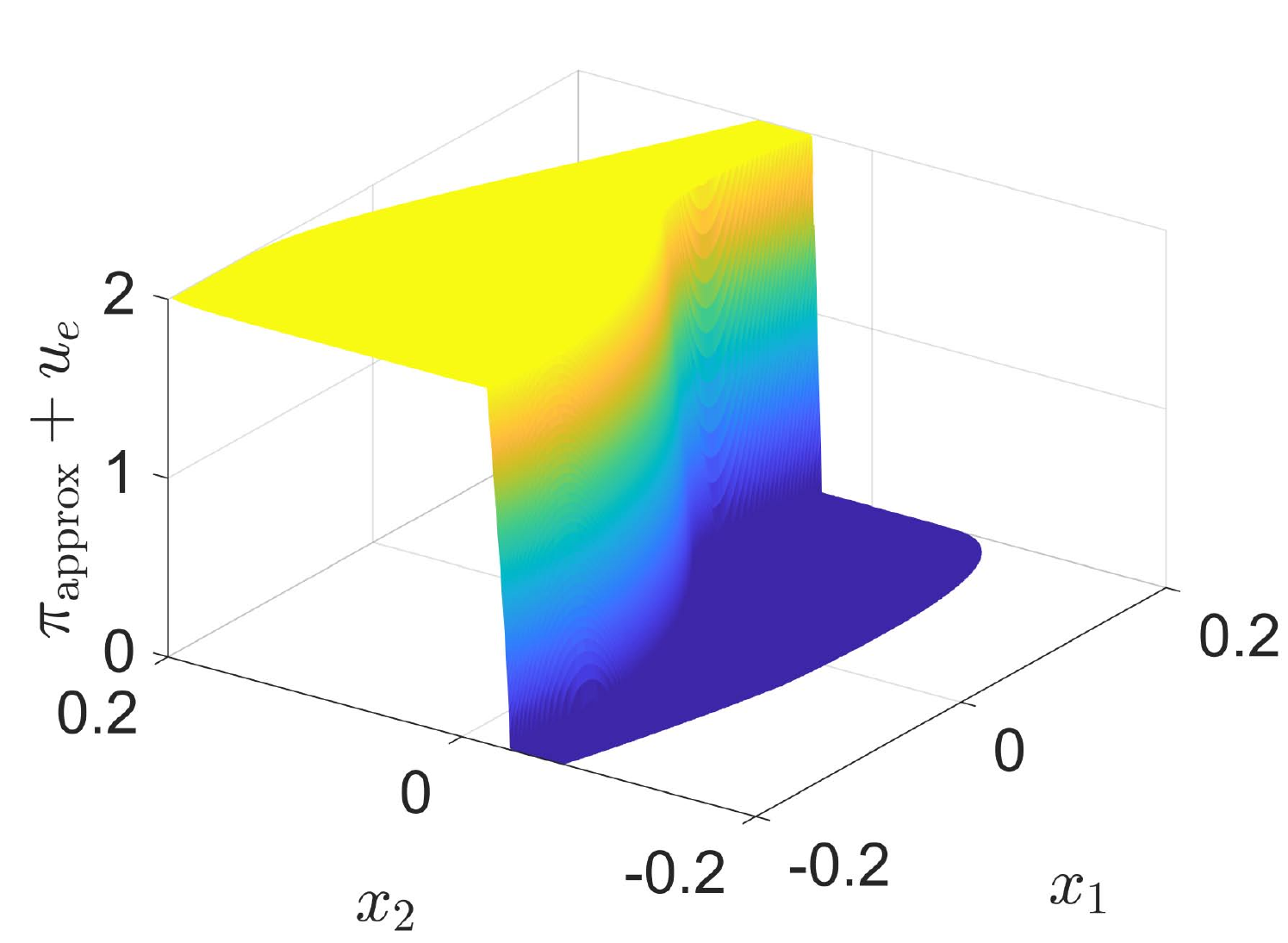}
 	}
 	\caption{Approximate MPC $\pi_\text{\normalfont approx}$ over the feasible set $\mathcal{X}_\text{\normalfont feas}$.}
 	\label{fig_surf} 
 \end{figure}
 \paragraph*{Step 5: Validation}
 We choose $\delta_h = 0.01$ and $\mu_{\text{\normalfont crit}} = 0.99$. As validation data, we sample initial conditions uniformly from $\mathcal{X}_\text{feas}$. 
 Algorithm 1 terminates, with $\tilde{\mu} = 0.9987$ for  $p= 34980$ trajectories. This implies $\mu_\text{\normalfont crit} <  \tilde{\mu} -  \varepsilon_h $  and hence, we achieve the desired guarantees. 
   We thus conclude from Theorem \ref{lem_algo} that, with a confidence of 99\%, the closed-loop system \eqref{eq_closed_loop} is stable and satisfies constraints with a probability of at least 99\%.
 
  The overall time to execute Algorithm 1 was roughly 500 hours on a Quad-Core PC. It is possible to significantly reduce this time, e.g., by parallelizing the sampling and the validation.
  
 \paragraph*{Simulation results}
 We compare the performance of the AMPC to \change{the saturated LQR controller that results from the same  stage costs}{ a standard LQR based on the linearization around the setpoint, the weights $Q,~R,$ and in addition} with $u$ saturated in $\mathcal{U}$.\footnote{\change{}{We have chosen Q and R such that the LQR shows good performance for a fair comparison. However, for some other choices (e.g. Q = I, R = 5), the LQR does not stabilize the system, while the AMPC does. 
 	}}
  Contrary to the AMPC, the LQR does not guarantee stability and constraint satisfaction for general nonlinear constrained systems. The convergence of both controllers is illustrated in Figure~\ref{fig_phase} (top) for some exemplary initial conditions. 
 
  \begin{figure}[tb]
	 
\subfigure{

 			\includegraphics[height = 5 cm, width = .9\linewidth]{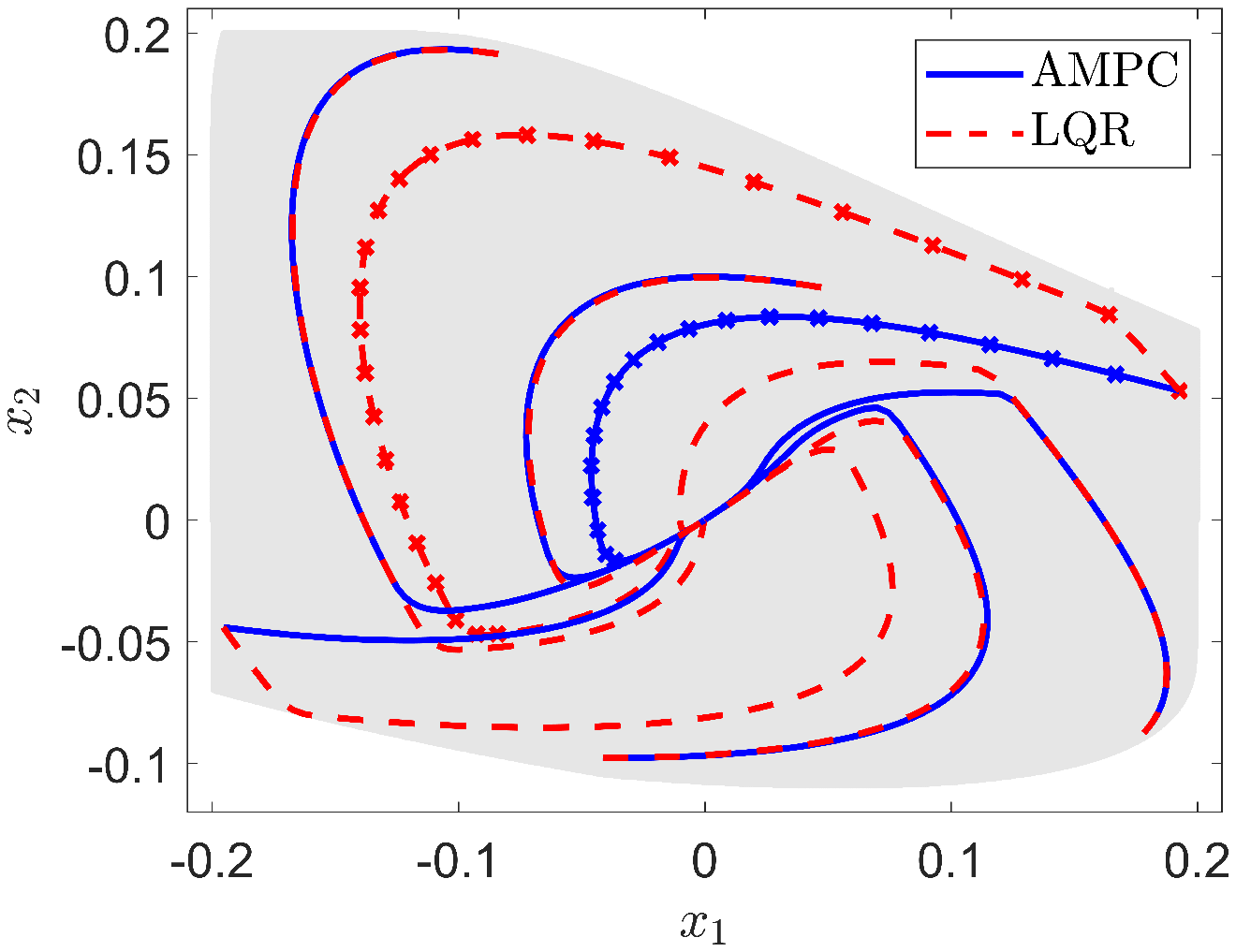} 
}

\subfigure{

	\includegraphics[width = 0.9\linewidth ]{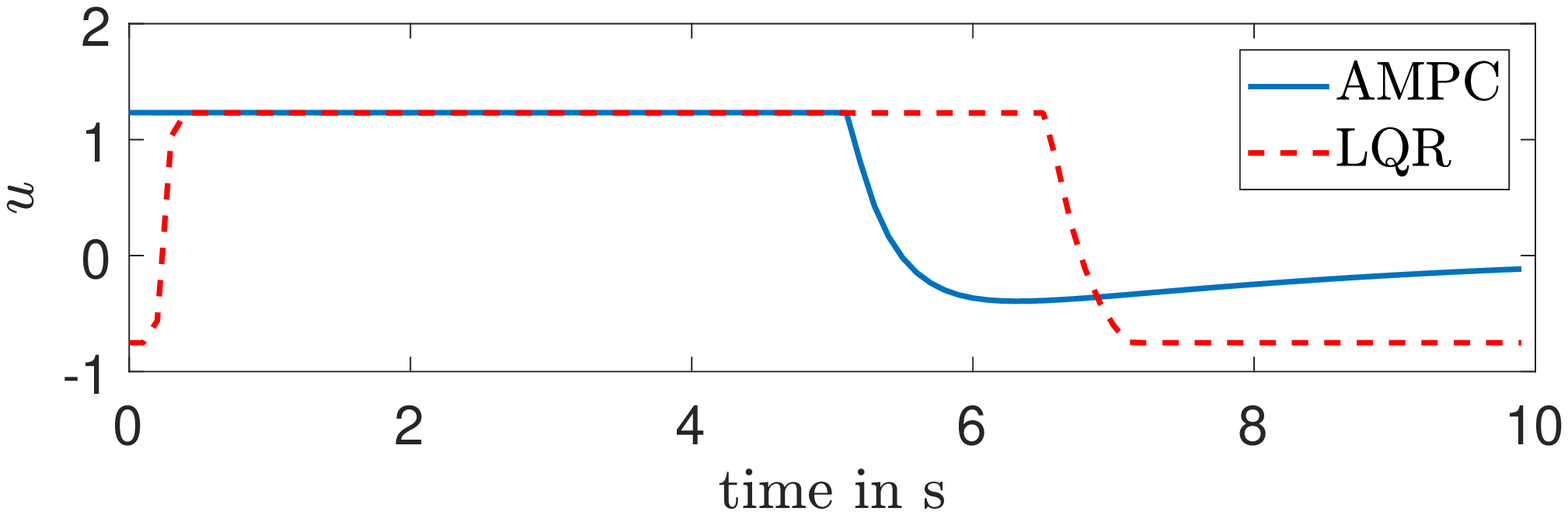}
}

 	\caption{ Top:  concentration ($x_1$) vs. temperature ($x_2$):  trajectories of AMPC (solid blue) and LQR (dashed red) and $\mathcal{X}_\text{\normalfont feas}$ (grey area). One exemplary pair of trajectories is marked with crosses.  Bottom: Coolant flow for the marked trajectory pair.}
 	\label{fig_phase} 
 \end{figure}
  While for some initial conditions, the AMPC and LQR trajectories are close (when nonlinearities play a subordinate role), there are significant differences for others. 
  For example, in Figure \ref{fig_phase} (bottom), the input flow is shown for the marked trajectory (top). Initially, the opposite input constraint is active for the LQR in comparison to the AMPC. This causes an initial divergence of the LQR trajectory leading to over three times higher costs. 
Due to the small approximation error, the trajectories of the AMPC are virtually indistinguishable from the original RMPC.
  
  \begin{remark}
  	In principle, the RMPC method from \cite{pin2013approximate} could also be used within the framework. However, the admissible approximation error for this method would be at most $\eta = 10^{-41}$, which makes the learning intractable. 
  \end{remark}
 
 To investigate the online computational demand of the AMPC, we evaluated the AMPC and the online solution of the RMPC optimization problem for 100 random points over $\mathcal{X}_\text{\normalfont feas}$.  The evaluation of {the RMPC} with Casadi took $0.71$ s on average, while the NN could be evaluated in Matlab in $3$ ms. This is already over 200 times faster \change{}{using a straightforward NN implementation in Matlab}, and further speed-up may be obtained by alternative NN implementations. 

Overall, the AMPC delivers a cost optimized, stabilizing feedback law for a nonlinear constrained system, which can be evaluated online on a cheap hardware with a high sampling rate.  
\section{Conclusion}
\label{sec_conclusion}

We proposed an RMPC scheme that guarantees stability and constraint satisfaction despite approximation errors. Stability and constraint satisfaction can thus be ensured if the approximate input is within user-defined bounds. 
Furthermore, we proposed a probabilistic method to validate the approximated MPC. By using statistical methods, we avoid conservatism, analyze complex controller structures and automate the controller synthesis with few design variables.

Tailored learning algorithms and more general verification with different indicator functions\change{}{, e.g., for the application to higher dimensional systems,} are part of future work.

\addtolength{\textheight}{-9cm}   



\section*{Acknowledgment}
The authors thank M.~Wuethrich for insightful discussions.



\bibliographystyle{IEEEtran}  
 
\bibliography{Literature}  

\end{document}